\newtheorem{theorem}{Theorem}[section]
\newtheorem{definition}[theorem]{Definition}
\newtheorem{prop}[theorem]{Proposition}
\newtheorem{coro}[theorem]{Corollary}
\newcommand{\ba}{\mbox{\boldmath $a$}}
\newcommand{\sA}{{\cal A}}
\newcommand{\sS}{{\cal S}}
\def\mAth{\mathsurround=0pt}
\def\eqalign#1{\,\vcenter{\openup1\jot \mAth
\ialign{\strut\hfil$\displaystyle{##}$&$\displaystyle{{}##}$\hfil
    \crcr#1\crcr}}\,}
\newcommand{\eq}{\begin{equation}}
\newcommand{\eeq}{\end{equation}}
\renewcommand{\bar}{\overline}
\newcommand{\base}{4}
\begin{document}
\title{\bf 
On the Complexity of
Combinatorial Optimization on Fixed Structures}
\author{Nimrod Megiddo\thanks{IBM Almaden Research Center, San Jose, California, \url{http://stanford.io/2hPN3sG}}}
\date{November 16, 2024}
\maketitle
\begin{abstract}
Combinatorial optimization can be described as the problem of finding a feasible subset  that maximizes a objective function.
The paper discusses combinatorial optimization
problems, where for each dimension the set of feasible subsets is fixed. It is demonstrated that in some cases fixing the structure makes the problem easier, whereas in general the problem remains NP-complete.
\end{abstract}

\section{\hskip -16pt. Introduction}
An optimization problem of the form 
$\mbox{Maximize}~ \sum_{j=1}^n c_j x_j$ 
subject to 
$\sum_{j=1}^n a_{ij} x_j \le b_i$, $i=1,\ldots,m$,
and $x_j \in \{0,1\}$, $j=1,\ldots,n$,
can be viewed as maximizing $c(S)=\sum_{j\in S}x_j$
over a collection $\sS$ of subsets 
$S \subseteq N=\{1,\ldots,n\}$.  
Most of complexity analysis is asymptotic as $n$ tends to infinity.
We are concerned here with the question of the complexity
of combinatorial problems when the underlying structure is fixed. 
We note that fixing the structure means that for every $n\in \mathbb{N}$ the structure is unique. 

An example where the underlying structure may be fixed is the {\em shortest weight-constrained path} problem \cite{garey1979computers} 
(Problem [ND30], Megiddo, 1977).  The problem is posed over a directed graph $G=(V,E)$, where each edge $e$ has a length $a(e)$, which is fixed, and a traversal time $c(e)$, which may vary. The problem is to find a path of least traversal time that is not longer than a given bound $b$.  The knapsack problem can be reduced to constrained shortest-path, hence the latter is NP-complete.

It is worth noting that constraints
$\sum_{j=1}^n a_j x_j \le b$ provide more information than just the collection of feasible subsets. For example, if two subsets
$S$ and $T$ have 
$\sum_{j\in S}a_{ij} = \sum_{j\in T}a_{ij}$ for every $i$, then this can become useful for a dynamic programming algorithm, as in known in the case of the knapsack problem.

We consider, for example,
Knapsack Feasibility, which is the following inequalities problem in $0,1$-variables $x_1,\ldots,x_n$:
\begin{equation} \label{knap:0}
\eqalign{ 
\sum_{j=1}^n a_j x_j \le &\ b    \cr
\sum_{j=1}^n c_j x_j \ge &\ d~,  \cr}
\end{equation} 
where  the weights $a_1,\ldots,a_n$, 
the capacity $b$, 
the rewards $c_1,\ldots,c_n$, 
and the total-reward bound $d$,
are given integers.  
Obviously, the structure here is 
\[ \sS = \{ S\subseteq N~:~ \sum_{j\in S} a_j \le b \}~.\]
The problem was one of the first proven by Karp \cite{karp1972reductibility} to be NP-complete.
It is NP-complete even when $a_j=c_j$, $j=1,\ldots,n$, and
$b=d$, in which case it is called the Subset-Sum problem. A special case of the latter, where $b= \sum_{j=1}^n a_j /2$, is called the Partition problem, which is also NP-complete.  
Hard instances of the knapsack problem were identified in \cite{chvatal1980hard}.

The {\em Fixed-Weights Knapsack Feasibility} problem is defined as follows. 
\begin{definition}\rm
{\rm [FW-Knapsack]}
Suppose $b^n$ and the coefficients $a_1^n,\ldots,a_n^n$ are fixed for every value of the number of variables $n$, and let the input instance consist only of $c_1,\ldots,c_n$ and $d$.
Given $c_1,\ldots,c_n$ and $d$, recognize whether or not 
the system  {\rm (\ref{knap:0})}  has a $0,1$-solution.
\end{definition}
We will use both $b$ and $a_0$ to denote the capacity, i.e., $a_0^n = b^n$.

The question we are concerned with in this paper is whether fixing the weights makes the problem easier than the standard one.  Obviously, in special cases the problem is easy, for example, when $a_1=\cdots=a_n$. 
Also,
since the $a_i$s are fixed, they can be preprocessed and the time it takes to preprocess them is not included in the time complexity of the fixed-weights problem.

\section{\hskip -16pt. Improved time bounds for special classes}
Suppose a particular FW-knapsack problem is defined with respect to a sequences of tuples 
$\ba^n = (a^n_1,\ldots,a^n_n)$, $n\in \mathbb{N}$.

Let $n$ be fixed, and for convenience we sometimes omit the superscript $n$.
In this section, we consider the optimization version of the problem:
\begin{equation}
\eqalign{
    \mathop{\rm Maximize~} & \sum_{j=i}^n c_i x_i \cr
    \mbox{\rm subject to ~} & \sum_{i=1}^n a_i x_i  \le b \cr
    &\ x_i\in\{0,1\} ~~~(i=1,\ldots,n) ~. \cr
    }
\end{equation}
\subsection{Maximal feasible subsets}
Denote by $\sS^n$ the set of maximal subsets among all subsets $S\subseteq N$
that satisfy $\sum_{j\in N} a_j \le b$.
That is, subsets $S \subseteq N$ such that
$\sum_{j\in S} a^n_j \le b^n$ and for every $T \supset S$, $\sum_{j\in S} a^n_j > b^n$. 
Trivial time bounds can be obtained based on the number of maximal feasible subsets.
Specifically, denote by $f(n)$ the number of 
members of $\sS^n$. 
The problem can obviously be solved by checking all the member of $\sS^n$ and finding one  $S\in \sS^n$ that maximizes $\sum_{j\in S} c_i$.  
This can be done in $O(n\cdot f(n))$ time.

Note that the production of the  list of all maximal feasible subsets is carried out prior to receiving the input instance $(c_1,\ldots,c_n)$.  
The running time of this preprocessing step can be much larger than the number $f(n)$.
From this aspect, the FW-knapsack problem can sometimes be easier than the standard one.
\subsection{Dynamic-Programming bounds}
Less trivial bounds can be obtained from dynamic-programming algorithms.
For every $j$, $j=1,\ldots,n$, denote by 
\[ \sA(n,j) = \{A(1;\,j), \ldots, A(N(n,j);\,j)\} \]
the set of all the distinct sums of weights $a_{\ell}$ over subsets of
$\{1,\ldots,j\}$.
For every $j$, $j=1,\ldots,n$, and $ S \in \sA(n,j)$, denote by $\nu(S,j)$ the index that satisfies
\[  S  = A(\nu(S,j);\,j) ~.\]

Denote by $P(j,\,\ell) = P(j,\,\ell;\, n)$ the following residual sub-problem:
\begin{equation}
\eqalign{
    \mathop{\rm Maximize~} & \sum_{i=j+1}^n c_i x_i \cr
    \mbox{\rm subject to ~} & \sum_{i=j+1}^n a_i x_i  \le b - A(\ell;\,j)   \cr
   &\ x_i\in\{0,1\} ~~~(i=j+1,\ldots,n) ~ \cr
}
\end{equation}
and let $F(j,\,\ell) = F(j,\,\ell;\, n)$ denote the optimal objective-function value of $P(j,\,\ell)$.  

Consider the problem  $P(j,\,\ell)$. 
Distinguish two cases:\\
Case I. We choose $x_j=1$, so we get a reward of $c_j$ and the residual capacity is reduced from $b-A(\ell;\,j)$
to $b - A(\ell;\,j) - a_j$.  
There exists $\ell'$ such that 
\[ A(\ell;\,j) + a_j = A(\ell';\,j+1) ~.\]
In fact, according to the above-defined notation,
\[ \ell' 
= \nu(A(\ell;\,j) + a_j, ~ j+1)~.\]
Thus, in this case the problem is reduced to $P(j+1,\,\nu(A(\ell;\,j) + a_j)) $.\\
Case II.  We choose $x_j=0$.  
In this case, we do not get any reward and the residual capacity does not change, so the problem is reduced to 
$P(j+1,\,\nu(A(\ell;\,j))) $.\\
It follows that the dynamic-programming recursive formula is:
\[ F(j,\,\ell) 
= \max \left\{ 
c_j + F(j+1,\,\nu(A(\ell;\,j) + a_j)),~
F(j+1,\,\nu(A(\ell;\,j))) 
\right\} ~.
\]

\section{\hskip -16pt. NP-completeness} 
For the proof of NP-completeness, we will employ here a reduction from the following variant of 3-SAT:
\begin{definition}\rm
[1-in-3-SAT] \cite{wikip}
Given a CNF formula with at most three literals per clause, recognize whether it has a satisfying truth-value assignment
so that each clause has exactly one true literal.
\end{definition}
  
This problem is known to be NP-complete
\cite{schaefer1978complexity, wikip}.
In the Appendix, we include a proof of that fact based on the sketch in \cite{wikip}.

We reduce 1-in-3-SAT to the knapsack  feasibility problem (\ref{knap:0}), 
where the weights are fixed for every value of $n$.
The reduction is similar to the standard reduction of 3-SAT to the subset-sum problem, using large coefficients.  
Each of the inequalities in (\ref{knap:0}) can emulate a set of inequalities, 
each of which involving a different set of digital positions in the coefficients.

\subsection{An optimization model for 1-in-3-SAT}
Denote by $k$ and $m$ the number of variables and the number of clauses, 
respectively, in the input instance of 1-in-3-SAT. 

We first develop a special  $0,1$-optimization model for 1-in-3-SAT, which is amenable to constraint aggregation, so that the knapsack weights are independent of the particular CNF formula.
Suppose the given instance of 1-in-3-SAT involves the set of literals
$Z = \{z_1,\bar z_1,\ldots,z_k,\bar z_k\}$,
and has $m$ clauses.

\subsubsection{Variables}
For each literal in $Z$, we will have a corresponding decision variable.  
Denote these decision variables by 
$$x_1,\bar x_1,\ldots,x_k,\bar x_k~.$$

We will also employ additional $0,1$-variables 
$x_{ij}$ and $\bar x_{i,j}$, 
$i=1,\ldots,m$ and $j=1,\ldots,k$,
as follows.  
We will have $x_{ij}=1$
if and only if 
the variable $x_j$ appears in the clause $C_i$, and $x_j=1$. 
Similarly, $\bar x_{ij}=1$
if and only if 
the variable $\bar x_j$ appears in the clause $C_i$, and $\bar x_j = 1$.
Finally, additional slack variables 
$s_{ij}$ and $\bar s_{ij}$
are introduced below.

\subsubsection{Constraints}
The model has the following ``unique-choice'' constraints, which do not depend on the particular input CNF formula:
\eq \label{constr:eq}
\eqalign{
x_j + \bar x_j 
=&\ 1~~~~~(j=1,\ldots,k)~\cr
\sum_{j=1}^k x_{ij}
+ \sum_{j=1}^k \bar x_{ij}
=&\ 1  ~~~~~(i=1,\ldots,m)~,\cr
}
\eeq
and the following inequality constraints, which also do not depend on the particular CNF formula: 
\eq  \label{constr:ineq}
\eqalign{
x_j \ge 
&\ x_{ij}
  ~~~~~(i=1,\ldots,m,~j=1,\ldots,k)\cr
\bar x_j \ge 
&\ \bar x_{ij}
  ~~~~~(i=1,\ldots,m,~j=1,\ldots,k)~.\cr
}
\eeq
For every clause $C_i$, we also impose a constraint as follows.
Suppose 
$C_i = u_i \vee v_i \vee w_i$, 
where\footnote{Without ambiguity, we also write $C_i = \{u_i,v_i,w_i\}$.} 
$$\{u_i, v_i, w_i\} \subset Z = \{z_1,\bar z_1,\ldots,z_k,\bar z_k\}
~~~~(i=1,\ldots,m)~.$$
Denote for every literal $z\in Z$,
\[ \delta(i,z) = 
\begin{cases}
    1 &  \mbox{ if $z \in C_i$}  \\
    0 &  \mbox{ if $z \not\in C_i$} ~.
\end{cases}
\]
The constraints that ensure satisfaction of the 
individual clauses will be
\eq \label{const:Ci}
\sum_{j\,:\, x_j \in C_i} x_{ij}
+ \sum_{j\,:\,\bar x_j \in C_i}
  \bar x_{ij} \ge  1 ~~~~~(i=1,\ldots,m)~.
\eeq
Finally, we impose the constraints
\eq \label{constr:tij}
x_{ij} \ge x_j ~~~
                (i=1,\ldots,m,~
                 j=1,\ldots,k \,:\,
                 x_j \in C_i )
\eeq
and
\eq \label{constr:tijbar}
\bar x_{ij} \ge \bar x_j ~~~
                (i=1,\ldots,m,~
                 j=1,\ldots,k \,:\,
                 \bar x_j \in C_i )~.
\eeq
The constraints 
(\ref{constr:tij}) 
and 
(\ref{constr:tijbar})
will be converted, respectively, to
\eq \label{constr:tijconverted}
(x_{ij}  + \bar x_j)\, 
\delta(i, x_j) 
\ge 
\delta(i, x_j)  ~~~~~(i=1,\ldots,m,~
                 j=1,\ldots,k)~.
\eeq
\eq \label{constr:tijbarconverted}
(\bar x_{ij}  + x_j )\,
\delta(i, \bar x_j)
\ge 
\delta(i, \bar x_j) ~~~~~(i=1,\ldots,m,~
                 j=1,\ldots,k)~.
\eeq
Note that when $x_j \notin C_i$, then the constraint (\ref{constr:tijconverted})
is redundant, 
and when $\bar x_j \notin C_i$, then the constraint (\ref{constr:tijbarconverted})
is redundant.  
The reason why we still use these constraints is because we have to fix the digital positions corresponding to individual constraints independently of the specific CNF formula.

\begin{prop}   \label{model_valifity}
A CNF formula with at most three literals per clause has a satisfying assignment with exactly one literal per clause if and only if the constraints 
(\ref{constr:eq}), 
(\ref{constr:ineq},
(\ref{const:Ci}),
(\ref{constr:tij}) 
and (\ref{constr:tijbar}) 
are satisfied
\end{prop}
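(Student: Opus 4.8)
The plan is to prove the two implications separately, in each case using the indicator variables $x_{ij},\bar x_{ij}$ as a faithful record of which literals of a clause are set true. The preliminary observation I would establish first is that, in any $0,1$-solution, the indicator of a literal that actually occurs in a clause is forced to equal that literal's value. Indeed, (\ref{constr:ineq}) gives $x_{ij}\le x_j$ for every $i,j$, while (\ref{constr:tij}) gives $x_{ij}\ge x_j$ whenever $x_j\in C_i$; together these yield $x_{ij}=x_j$ for $x_j\in C_i$, and symmetrically $\bar x_{ij}=\bar x_j$ whenever $\bar x_j\in C_i$. For a literal absent from $C_i$ only the upper bound $x_{ij}\le x_j$ survives, so its indicator is merely nonnegative and bounded above by the literal.

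For the forward direction, given a satisfying assignment in which each clause has exactly one true literal, I would set $x_j=1$ precisely when $z_j$ is true (and $\bar x_j=1-x_j$), and set each indicator to $1$ precisely when its literal both occurs in the clause and is true. Verifying (\ref{constr:eq})--(\ref{constr:tijbar}) is then routine: $x_j+\bar x_j=1$ holds by construction; the inequalities (\ref{constr:ineq}), (\ref{constr:tij}), (\ref{constr:tijbar}) hold because on occurring literals the indicator equals the literal and elsewhere it is $0$; and both the per-clause equation of (\ref{constr:eq}) and the clause constraint (\ref{const:Ci}) hold because the indicators set to $1$ in clause $C_i$ are exactly its true literals, of which there is exactly one.

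For the converse I would read a truth assignment off an arbitrary $0,1$-solution by declaring $z_j$ true iff $x_j=1$; the equation $x_j+\bar x_j=1$ makes this well-defined and consistent. Writing $T_i$ for the number of true literals of $C_i$, the preliminary observation gives
\[
\sum_{j\,:\,x_j\in C_i} x_{ij}+\sum_{j\,:\,\bar x_j\in C_i}\bar x_{ij}=T_i .
\]
The main step is to squeeze $T_i$ to exactly $1$: constraint (\ref{const:Ci}) forces $T_i\ge1$, while the unique-choice equation $\sum_{j}x_{ij}+\sum_{j}\bar x_{ij}=1$ of (\ref{constr:eq}), whose left-hand side equals $T_i$ plus the nonnegative indicators of literals absent from $C_i$, forces $T_i\le1$ (and incidentally annihilates every absent-literal indicator). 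Hence $T_i=1$ for each $i$, which is exactly the $1$-in-$3$ condition.

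The only genuine obstacle is this squeezing argument, and the one point within it that needs care is the mismatch between the two sums: the unique-choice equation ranges over all $2k$ indicators of clause $i$, whereas (\ref{const:Ci}) ranges only over the literals present in $C_i$. Because their difference is a sum of nonnegative terms, (\ref{const:Ci}) and the equation together pinch $T_i$ to $1$ and simultaneously force the extraneous indicators to vanish. I note finally that the bound of three literals per clause plays no role here; the equivalence holds for clauses of any size, the bound mattering only later when digital positions are budgeted in the aggregation.
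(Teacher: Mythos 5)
Your proof is correct and follows essentially the same route as the paper: the forward direction is routine, and the backward direction combines the clause constraint (\ref{const:Ci}) with the unique-choice equation in (\ref{constr:eq}) and the coupling $x_{ij}=x_j$ for literals present in $C_i$ (from (\ref{constr:ineq}) together with (\ref{constr:tij})/(\ref{constr:tijbar})) to pin the number of true literals per clause to exactly one. Your counting formulation via $T_i$ is a slightly cleaner packaging of the paper's pointwise forcing argument, but the underlying mechanism is identical.
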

\begin{proof}
The `only if' part of obvious.
For the `if' part, suppose the constraints are satisfied.
The constraints (\ref{constr:eq}) and 
(\ref{const:Ci}) imply that the truth-value assignments defined by the $x_j$s and 
$\bar x_j$s satisfy all the clauses.
Also, the constraints (\ref{constr:tij})
imply that 
if $x_j=1$ and $x_j\in C_i$, then
$x_{ij}=1$.  
By the constraints
(\ref{constr:ineq}), the latter implies that 
for every $j'\not=j$, 
$x_{ij'}=\bar x_{ij'}=0$, 
and  also
$\bar x_{ij} = 0$. 
Thus, if $x_{j'} \in C_i$, 
then we have $x_{j'}=0$, 
and if $\bar x_{j'} \in C_i$, 
then we have $\bar x_{j'}=0$.

Similarly, by (\ref{constr:tijbar})
if $\bar x_j=1$ and $\bar x_j\in C_i$, 
then $\bar x_{ij}=1$, 
and hence for every $j'\not=j$, 
$x_{ij'}=\bar x_{ij'}=0$, and also
$x_{ij}= 0$. 
Thus, if $\bar x_{j'} \in C_i$, 
then we have $\bar x_{j'}=0$, 
and if $x_{j'} \in C_i$, 
then we have $x_{j'}=0$.

All of the above implies that in the 
truth-value assignment defined by the $x_j$s and $\bar x_j$s exactly one literal per clause is true.
\end{proof}

\subsubsection{Converting to unique-choice constraints}
The constraint 
\[ x_i \ge x_{ij}  \]
can be turned into a unique-choice constraint using a slack variable $s_{ij}$ as follows.
\eq \label{constr:sij}
x_{ij} + s_{ij} + \bar x_j = 1 ~.
\eeq
Similarly, the constraint
\[ \bar x_i \ge \bar x_{ij} \]
can be turned into a unique-choice constraint using a slack variable 
$\bar s_{ij}$ as follows.
\eq \label{constr:barsij}
\bar x_{ij} + \bar s_{ij} + x_j = 1 ~.
\eeq

\subsection{Aggregating constraints}
Reductions of integer-programming problems to the knapsack problem were proposed as early as 1971 \cite{bradley1971transformation}.

\renewcommand{\base}{\beta}
\begin{prop} \label{prop:aggreg}  
The following system of $m$ equality constraints  
with $0,1$-coefficients $a_{ij}$
and $n$ $0,1$-variables
$x_1,\ldots,x_n$ 
\begin{equation} \label{many}
\sum_{j=1}^n a_{ij} x_j =1 ~~~~~(i=1,\ldots,m)~,
\end{equation}
where there are at most $p$ nonzeros coefficients per equality,
is equivalent to the following single 
equality constraint:
\begin{equation} \label{sing} 
\sum_{i=1}^m \base ^{i-1} \bigg(\sum_{j=1}^n a_{ij} x_j\bigg) = \sum_{i=1}^m \base ^{i-1} 
=  \frac{\base ^m - 1 }{\base  - 1}~,
\end{equation}
where $\beta = p+1$.
\end{prop}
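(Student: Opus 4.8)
The plan is to read the left-hand side of the aggregated constraint (\ref{sing}) as a base-$\beta$ numeral whose $i$-th ``digit'' is the value $\sum_{j=1}^n a_{ij}x_j$ of the $i$-th original constraint, and then to invoke uniqueness of positional representation. The ``only if'' direction is immediate: if every constraint in (\ref{many}) holds, then substituting $\sum_{j=1}^n a_{ij}x_j = 1$ into the left-hand side of (\ref{sing}) yields $\sum_{i=1}^m \beta^{i-1} = (\beta^m-1)/(\beta-1)$, which is exactly the right-hand side. So all the content is in the converse.

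For the converse, I would fix any $0,1$-assignment $x_1,\ldots,x_n$ satisfying (\ref{sing}) and set $d_i = \sum_{j=1}^n a_{ij}x_j$ for $i=1,\ldots,m$. The crucial observation is a bound on the digits: since every $a_{ij}$ and every $x_j$ lies in $\{0,1\}$, and since each row has at most $p$ nonzero coefficients, each $d_i$ is a nonnegative integer with $0 \le d_i \le p = \beta - 1$. Hence $(d_1,\ldots,d_m)$ is a legitimate string of base-$\beta$ digits, and the aggregated equation (\ref{sing}) reads $\sum_{i=1}^m \beta^{i-1} d_i = \sum_{i=1}^m \beta^{i-1}\cdot 1$.

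It remains to conclude $d_i = 1$ for all $i$, i.e. that the two digit-strings $(d_i)$ and $(1,\ldots,1)$ representing the same integer must coincide. I would establish this directly from $\sum_{i=1}^m \beta^{i-1}(d_i-1)=0$, where each $d_i - 1 \in \{-1,0,\ldots,\beta-2\}$, by a lowest-position-first argument: supposing not, let $i_0$ be the least index with $d_{i_0}\neq 1$; then $\beta^{i_0-1}(d_{i_0}-1) = -\sum_{i>i_0}\beta^{i-1}(d_i-1)$, where the right-hand side is divisible by $\beta^{i_0}$, yet the left-hand side is $\beta^{i_0-1}$ times a nonzero integer of absolute value at most $\beta-1 < \beta$, hence not divisible by $\beta^{i_0}$ --- a contradiction. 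Therefore $d_i=1$ for every $i$, which is precisely (\ref{many}).

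The single point that actually drives the argument --- and the only place the hypothesis enters --- is the digit bound $d_i \le p$, guaranteeing that no digit can reach $\beta$. This is exactly why $\beta$ is taken to be $p+1$: it rules out any ``carry'' between digital positions, so the lone aggregated equation cannot be met by borrowing a surplus from one constraint to cover a deficit in another. I expect the only (minor) obstacle to be phrasing this no-carry/uniqueness step cleanly; there is no genuine difficulty beyond it.
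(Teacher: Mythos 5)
Your proof is correct, and its converse direction takes a genuinely different (and cleaner) route than the paper's. Both arguments hinge on the same single fact --- each digit $d_i=\sum_{j=1}^n a_{ij}x_j$ satisfies $0\le d_i\le p=\beta-1$, so no carries can occur between positions --- but the paper establishes $d_i=1$ by \emph{downward induction from the most significant position} $i=m$: at each step it sandwiches $\beta^{i-1}d_i$ between explicit upper and lower bounds derived from $(\beta^m-1)/(\beta-1)$, the bound $p$ on the lower digits, and the already-established values $d_{i+1}=\cdots=d_m=1$, then extracts $d_i=1$ by floor/ceiling computations. You instead argue from the \emph{least} significant position: from $\sum_{i=1}^m\beta^{i-1}(d_i-1)=0$ with each $d_i-1$ a nonzero-or-zero integer of absolute value less than $\beta$, you take the least index $i_0$ with $d_{i_0}\ne 1$ and obtain a contradiction because the tail is divisible by $\beta^{i_0}$ while $\beta^{i_0-1}(d_{i_0}-1)$ is not. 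This is the standard uniqueness-of-base-$\beta$-representation argument; it dispenses with the chain of inequalities entirely and is shorter and less error-prone (the inequality manipulations are precisely where the paper's proof accumulates its typographical slips), whereas the paper's induction has the mild advantage of displaying the quantitative slack available at each digital position. In both proofs the hypothesis $\beta=p+1$ enters at exactly the same point, namely in guaranteeing $d_i\le\beta-1$.
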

The proof is given in Appendix A.

\subsection{The reduction}
In the reduction to a fixed-weights knapsack problem,
for every pair $(k,m)$, we set the number of variables in the knapsack problem to 
$n = 2k + 4 k\,m$ ($2k$ variables of the type
$x_j$ or $\bar x_j$ and $k\,m$ variables of each of the types $x_{ij}$, $\bar x_{ij}$,
$s_{ij}$ and $\bar s_{ij}$).
We will fix an $n$-tuple of weights $(a^{(k,m)}_{1},\ldots,a^{(k,m)}_{n})$
and a capacity $b^{(k,m)}$, 
so that for every given instance
of 1-in-3-SAT with the same numbers of variables and clauses, the reduction uses the same weights 
$a_j = a^{(k,m)}_{j}$, $j=1,\ldots,n$, 
and capacity $b = b^{(k,m)}$.
To avoid a situation where the same number $k$ occurs in multiple different pairs $(k,m)$, 
we may assume, without loss of generality, that $k=m$.

We first apply Proposition \ref{prop:aggreg}
to the family of the constraints (\ref{constr:eq}), 
(\ref{constr:sij}) 
and (\ref{constr:barsij}).  
Thus, we obtain a linear function
\[
\eqalign{
A = A((x_j),(\bar x_j), &
(x_{ij}), (\bar x_{ij}),
(s_{ij}), (\bar s_{ij})) \cr
= &\ \sum_{j=1}^k \bigg( a(x_{j})x_{j} + a(\bar x_j) \bar x_{j}
+\sum_{i=1}^m \bigg(
  a(x_{ij})x_{ij} +  a(\bar x_{ij})\bar x_{ij} +
  a(s_{ij}) z_{ij}+  a(\bar s_{ij})\bar s_{ij}
  \bigg)\bigg)
}\]
and a scalar $b$ so that equation
\eq \label{const:sing_pre}
A((x_j),(\bar x_j),(x_{ij}),
(\bar x_{ij}),
(s_{ij}),
(\bar s_{ij}))
  = b
\eeq
is equivalent to the conjunction of those constraints.
We will represent (\ref{const:sing_pre})
as two inequalities:
\begin{eqnarray}
A((x_j),(\bar x_j),(x_{ij}),
(\bar x_{ij}),
(s_{ij}),
(\bar s_{ij}))
  &\le & b  \label{const_knp}
  \\
A((x_j),(\bar x_j),(x_{ij}),
(\bar x_{ij}),
(s_{ij}),
(\bar s_{ij}))
  &\ge & b ~. \label{const_rmv}
\end{eqnarray}
The first will serve as the knapsack weight constraint, whereas the second will be used for constructing the knapsack value constraint.

We aggregate the constraints 
(\ref{const:Ci}), 
(\ref{constr:tijconverted}
and (\ref{constr:tijbarconverted}) 
using the coefficients given in Table 1.
\begin{table}[h]
\begin{center}
\begin{tabular}{|c|c|c|}
\hline
& constraints & coefficient\\
\hline 
&&\\
 (\ref{const:Ci} )
 &
$\sum_{j\,:\, x_j \in C_i} x_{ij}
+ \sum_{j\,:\,\bar x_j \in C_i}
  \bar x_{ij} \ge  1$ 
&  $\beta^{i-1}$
\\
&&\\
\hline
&&\\
(\ref{constr:tijconverted})
& $(x_{ij}  + \bar x_j)\,\delta(i, x_j) 
\ge 
\delta(i, x_j)$
& $\beta^{m+k(i-1)+(j-1)}$
\\
&&\\
\hline
&&\\
(\ref{constr:tijbarconverted})
&
$(\bar x_{ij}  + x_j )\,\delta(i, \bar x_j)
\ge 
\delta(i, \bar x_j)$
&
$\beta^{m+km+k(i-1)+(j-1)}$
\\
&&\\
\hline
\end{tabular}
\end{center}
\caption{Aggregation coefficients}
\end{table}
Thus, denote
\eq
\eqalign{
C_0((x_j),(\bar x_j),&
(x_{ij}), (\bar x_{ij}),
(s_{ij}), (\bar s_{ij}))
=  \cr
&\ ~~~~ \sum_{i=1}^m \base^{i-1}
     \bigg(\sum_{j\,:\, x_j\in C_i}x_{ij}~
        + \sum_{j\,:\, \bar x_j\in C_i}x_{ij}
        \bigg) \cr
&\ + \sum_{i=1}^m\sum_{j=1}^k
   \base^{m + k\,(i-1) + (j-1)}
   ~(x_{ij} +\bar x_j)\,\delta(i, x_j) \cr
&\ + \sum_{i=1}^m\sum_{j=1}^k
   \base^{m + k\, m + k\,(i-1) + (j-1)}
   ~(\bar x_{ij} + x_j)) \,\delta(i, \bar x_j)
}
\eeq
and 
\eq
\eqalign{
d_0 ~=&\ ~ 
\sum_{i=1}^m \base^{i-1}
~+~ \sum_{i=1}^m\sum_{j=1}^k
   \base^{m + k\,(i-1) + (j-1)}
   \cdot \delta(i,x_j)
~+~ \sum_{i=1}^m\sum_{j=1}^k
   \base^{m + k\, m + k\,(i-1) + (j-1)}
   \cdot \delta(i,\bar x_j)\cr
\le &\ \sum_{i=0}^{ (2k+1)\, m  -1}\base^i
 ~~=~~ \frac{\base^{(2k+1)\,m} - 1}{\base -1}
}\eeq
and the aggregated constraint is
\eq  \label{const:C0d0}
C_0((x_j),(\bar x_j)
(x_{ij}), (\bar x_{ij}),
(s_{ij}), (\bar s_{ij})) \ge d_0 ~.
\eeq
Finally, we combine $A$ and $C_0$ into:
\eq \label{def:C}
C = \beta^{(2k+1)\,m} A + C_0 
\eeq
and $b$ and $d_0$ into
\eq \label{def:d}
d = \beta^{(2k+1)\,m}\, b + d_0 ~.
\eeq
Thus, we use the following:
\eq \label{finalknap}
C((x_j),(\bar x_j),
(x_{ij}), (\bar x_{ij}),
(s_{ij}), (\bar s_{ij})) \ge d
\eeq 
as the knapsack value constraint.

\begin{prop} \label{prop:3.4}
If the constraints 
{\em(\ref{const_knp})} and {\em(\ref{finalknap})}
are satisfied, then the constraints
{\em(\ref{const_rmv})} and {\em(\ref{const:C0d0})} are satisfied.
\end{prop}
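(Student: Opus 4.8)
The plan is to exploit the separation of digital positions that is built into the definitions (\ref{def:C}) and (\ref{def:d}): the term $\beta^{(2k+1)m}A$ in $C$ lives in base-$\beta$ positions at or above $(2k+1)m$, while $C_0$ (and likewise $d_0$) is designed to occupy only positions $0$ through $(2k+1)m-1$. If I can confirm that $C_0$, just like $d_0$, never reaches $\beta^{(2k+1)m}$, then the two parts of the value constraint (\ref{finalknap}) cannot interfere, and I can read off the two desired conclusions one at a time: first recover $A\ge b$, then substitute to recover $C_0\ge d_0$.

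First I would establish the uniform bound $0\le C_0\le \beta^{(2k+1)m}-1$ for every $0,1$-assignment. Nonnegativity is immediate, since $C_0$ is a sum of products of nonnegative terms. For the upper bound I would check that the three families of coefficients defining $C_0$ occupy pairwise disjoint exponents: the clause constraints (\ref{const:Ci}) use $\beta^0,\ldots,\beta^{m-1}$, the constraints (\ref{constr:tijconverted}) use $\beta^{m},\ldots,\beta^{m+km-1}$, and the constraints (\ref{constr:tijbarconverted}) use $\beta^{m+km},\ldots,\beta^{(2k+1)m-1}$, covering the positions $0,\ldots,(2k+1)m-1$ with no repetition. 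At each such exponent the attached expression is a sum of at most $p$ terms of value $0$ or $1$ — at most three literals in a clause, and at most two in each converted constraint — so every base-$\beta$ ``digit'' is at most $p=\beta-1$. Summing the geometric series then gives $C_0\le(\beta-1)\sum_{i=0}^{(2k+1)m-1}\beta^i=\beta^{(2k+1)m}-1$, exactly the computation that already bounds $d_0$ in the display preceding (\ref{const:C0d0}).

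Next I would combine this with the hypotheses. Subtracting $d=\beta^{(2k+1)m}b+d_0$ from $C=\beta^{(2k+1)m}A+C_0$ rewrites (\ref{finalknap}) as $\beta^{(2k+1)m}(A-b)\ge d_0-C_0$. Since $d_0\ge 0$ and $C_0\le\beta^{(2k+1)m}-1$, the right-hand side strictly exceeds $-\beta^{(2k+1)m}$, whence $A-b>-1$; because $A$ and $b$ are integers, this forces $A\ge b$, which is precisely (\ref{const_rmv}). Together with the weight constraint (\ref{const_knp}), namely $A\le b$, this yields $A=b$. Substituting $A=b$ back into (\ref{finalknap}) cancels the $\beta^{(2k+1)m}b$ terms and leaves $C_0\ge d_0$, which is (\ref{const:C0d0}).

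The only real obstacle is the upper bound $C_0\le\beta^{(2k+1)m}-1$ of the first step: it is exactly the reason the base is chosen as $\beta=p+1$ and no smaller, so that no single constraint value can spill over into the next digital position, and so that the disjoint-exponent bookkeeping yields a genuine base-$\beta$ expansion with all digits below $\beta$. Once this no-overflow property is secured, the remainder reduces to integrality of $A-b$ and a one-line back-substitution.
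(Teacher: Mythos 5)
Your proof is correct and follows essentially the same route as the paper's: exploit the positional separation between $\beta^{(2k+1)m}A$ and $C_0$ to deduce $A\ge b$ by integrality, combine with (\ref{const_knp}) to get $A=b$, and substitute back into (\ref{finalknap}) to obtain $C_0\ge d_0$. One point in your favor: your digit-by-digit bound $C_0\le(\beta-1)\sum_{i=0}^{(2k+1)m-1}\beta^i=\beta^{(2k+1)m}-1$ is the right way to justify $C_0<\beta^{(2k+1)m}$, whereas the paper passes through the looser intermediate bound $2km\sum_{i=0}^{(2k+1)m-1}\beta^i$ and asserts it is $<\beta^{(2k+1)m}$, which fails for $\beta=p+1$ once $2km\ge\beta-1$; so your version actually repairs a slip in the paper's displayed chain of inequalities.
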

\begin{proof}
Suppose the constraints 
(\ref{const_knp}) and (\ref{finalknap})
are satisfied.
Suppose the constraint (\ref{const_rmv}) is not satisfied.
Thus,
\[ A \le b - 1 \]
and therefore,
\[ \eqalign{
\base^{(2k+1)\, m } A + C_0
~~\le~~ &\ 
\base^{2k+1)\,m} (b-1) 
+  2k\,m \sum_{i=0}^{(2k+1)\,m - 1}\base^i\cr
~~<~~ &\ \base^{2k+1)\,m} (b-1) + \base^{(2k+1)\,m}
~~<~~ d ~.
}\]
It follows that
(\ref{finalknap}) is not satisfied, hence a contradiction. Thus, we proved that
(\ref{const_rmv}) is satisfied and we have established (\ref{const:sing_pre}), $A=b$.

Next,
\[ \eqalign{
d_0 
= &\  d - \base^{(2k+1)\,m}\, b  
~~~~~~~~\mbox{(by (\ref{def:d})\,)}\cr
\le 
&\ C - \base^{(2k+1)\,m}\, b  
~~~~~~~~\mbox{(by (\ref{finalknap}))}
\cr
= &\ \base^{(2k+1)\,m} A + C_0 
- \base^{(2k+1)\,m} b = C_0 ~.
~~~~~~~~\mbox{(by (\ref{def:C})\,)}
\cr
}\]
This proves that also (\ref{const:C0d0})
is satisfied.
\end{proof}
\begin{prop}
If the constraints {\em(\ref{const_rmv})} and 
{\em(\ref{finalknap})} are satisfied, then 
all the constraints described  in {\em(\ref{constr:eq}), (\ref{constr:sij})} and
{\em(\ref{constr:barsij})} are satisfied.
\end{prop}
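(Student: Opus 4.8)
The plan is to observe that the three families (\ref{constr:eq}), (\ref{constr:sij}) and (\ref{constr:barsij}) are precisely the equality constraints that were fed into Proposition \ref{prop:aggreg} to produce the single scalar equation $A = b$, namely (\ref{const:sing_pre}). By that proposition the conjunction of all those individual equalities holds \emph{if and only if} $A = b$. Hence the whole statement collapses to the single scalar identity $A = b$, and there is nothing clause-specific left to verify once that identity is in hand.

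To establish $A = b$ I would argue by two inequalities. The lower bound $A \ge b$ is exactly hypothesis (\ref{const_rmv}). For the upper bound $A \le b$ I would invoke the knapsack weight constraint (\ref{const_knp}), which holds for every feasible assignment of the knapsack instance under consideration (the instance being defined by the pair (\ref{const_knp}) and (\ref{finalknap})). Combining the two gives $A = b$, and a single appeal to Proposition \ref{prop:aggreg} then returns all of (\ref{constr:eq}), (\ref{constr:sij}) and (\ref{constr:barsij}) simultaneously.

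The step I expect to be the real obstacle is the upper bound $A \le b$, and it is worth flagging that it cannot be extracted from (\ref{const_rmv}) and (\ref{finalknap}) alone. Indeed, since $C = \beta^{(2k+1)m} A + C_0$ with $0 \le C_0 < \beta^{(2k+1)m}$ and $0 \le d_0 < \beta^{(2k+1)m}$, any $0,1$-assignment with $A \ge b+1$ already forces $C \ge \beta^{(2k+1)m}(b+1) > \beta^{(2k+1)m} b + d_0 = d$, so it satisfies both (\ref{const_rmv}) and (\ref{finalknap}) while still violating a unique-choice equality: for instance $x_1 = \bar x_1 = 1$ makes the first equality in (\ref{constr:eq}) fail yet pushes $A$ strictly above $b$. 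This is exactly the digit-block separation exploited in Proposition \ref{prop:3.4}, now read in the opposite direction. Consequently the argument must use the weight constraint (\ref{const_knp}) explicitly as a standing hypothesis; once $A \le b$ is secured, the aggregation lemma closes the proof immediately.
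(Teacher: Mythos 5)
Your proof takes the same route as the paper: the paper's entire proof of this proposition is the sentence ``This is a direct application of Proposition \ref{prop:aggreg},'' i.e., it reduces the claim to the scalar identity $A=b$ of (\ref{const:sing_pre}) and then invokes the aggregation lemma, exactly as you do. The substantive addition in your proposal is the flag you raise, and it is correct: Proposition \ref{prop:aggreg} requires the \emph{equality} $A=b$, whereas the stated hypotheses (\ref{const_rmv}) and (\ref{finalknap}) only deliver $A\ge b$. Your counterexample is genuine --- under the all-ones assignment every aggregated row sum is at least $2$, hence $A\ge 2b>b$ and $C\ge\beta^{(2k+1)m}(b+1)>\beta^{(2k+1)m}b+d_0=d$, so both stated hypotheses hold while $x_1+\bar x_1=2$ violates (\ref{constr:eq}). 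The paper silently relies on the fact that $A=b$ was already established inside the proof of Proposition \ref{prop:3.4}, whose hypotheses include the weight constraint (\ref{const_knp}); your restatement, which takes (\ref{const_knp}) as an explicit standing hypothesis, obtains $A\le b$ from it and $A\ge b$ from (\ref{const_rmv}), and only then applies the aggregation lemma, is the correct repair and reflects how the proposition is actually used in the chain leading to the final equivalence.
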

\begin{proof}
This is a direct application of Proposition
\ref{prop:aggreg}.
\end{proof}
\begin{prop}
If the constraints {\em(\ref{const_knp})} and {\em(\ref{finalknap})}
are satisfied, then all of the constraints
enumerated in 
{\em (\ref{const:Ci}),
(\ref{constr:tijconverted})} and 
{\em(\ref{constr:tijbarconverted})}
are satisfied.
\end{prop}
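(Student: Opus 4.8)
The plan is to combine the two already-established propositions with a base-$\beta$ digit-separation argument. First I would invoke Proposition \ref{prop:3.4}: since the hypotheses (\ref{const_knp}) and (\ref{finalknap}) hold, it yields both (\ref{const_rmv}) and (\ref{const:C0d0}), so in particular $A \ge b$ together with (\ref{const_knp}) gives $A = b$, and we have $C_0 \ge d_0$. Next, because (\ref{const_rmv}) and (\ref{finalknap}) now both hold, the immediately preceding proposition gives that all the equality constraints (\ref{constr:eq}), (\ref{constr:sij}) and (\ref{constr:barsij}) are satisfied. These equalities are the engine of the whole argument.

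The second step is to use those equalities to bound, term by term, each ``digit'' of $C_0$ by the corresponding digit of $d_0$. The unique-choice equality $\sum_{j=1}^k x_{ij} + \sum_{j=1}^k \bar x_{ij} = 1$ forces the clause sum sitting at the position $\beta^{i-1}$ to be at most $1$; the slack equality (\ref{constr:sij}) gives $x_{ij} + \bar x_j \le 1$, so the term $(x_{ij} + \bar x_j)\,\delta(i, x_j)$ at position $\beta^{m+k(i-1)+(j-1)}$ is at most $\delta(i, x_j)$; and (\ref{constr:barsij}) similarly gives $(\bar x_{ij} + x_j)\,\delta(i, \bar x_j) \le \delta(i, \bar x_j)$. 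Thus every coefficient of a power of $\beta$ appearing in $C_0$ is bounded above by the matching coefficient in $d_0$.

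The third step converts this into the conclusion. The three families of constraints occupy the pairwise-disjoint exponent ranges $\{0,\ldots,m-1\}$, $\{m,\ldots,m+km-1\}$ and $\{m+km,\ldots,m+2km-1\}$, and since $\beta = p+1$ strictly exceeds every digit value, the digitwise bounds assemble, with no carrying, into $C_0 \le d_0$. Combined with $C_0 \ge d_0$ from (\ref{const:C0d0}) this forces $C_0 = d_0$; and because each digit of $C_0$ was bounded above by the matching digit of $d_0$, equality of the totals propagates to equality in every digit. Reading off the digits then gives the clause sum exactly $1$ (so (\ref{const:Ci}) holds) and each converted term equal to its $\delta$-valued right-hand side (so (\ref{constr:tijconverted}) and (\ref{constr:tijbarconverted}) hold).

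The hard part will be justifying the no-carry step cleanly, since $C_0 \ge d_0$ by itself does \emph{not} imply the individual constraints: a larger integer may well have a smaller low-order digit. What rescues the argument is precisely that the equality constraints supply a matching \emph{upper} bound on every digit, so the inequality of (\ref{const:C0d0}) can be met only by digitwise equality. I would make this rigorous by the same base-$\beta$ uniqueness-of-representation lemma that underlies Proposition \ref{prop:aggreg}, applied now in the one-sided form $0 \le (\text{digit of } C_0) \le (\text{digit of } d_0) < \beta$.
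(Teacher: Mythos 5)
Your proposal is correct and follows essentially the same route as the paper: invoke Proposition \ref{prop:3.4} to obtain $C_0 \ge d_0$, use the equality constraints (\ref{constr:eq}), (\ref{constr:sij}) and (\ref{constr:barsij}) to bound each base-$\beta$ digit of $C_0$ above by the corresponding digit of $d_0$ (both lying in $\{0,1\}$), and conclude that the total inequality forces digitwise equality, which is exactly the paper's ``$V_i \le U_i$ and $C_0 \ge d_0$ imply $V_i = U_i$'' argument. Your version is if anything slightly more careful, since you make explicit the appeal to the preceding proposition to justify that the equality constraints hold before using them.
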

\begin{proof}
Under the conditions of the present proposition, by Proposition \ref{prop:3.4}, also (\ref{const:C0d0}) is satisfied.

Note that by (\ref{constr:eq}),
\[ 
\sum_{j\,:\,x_j\in C_i} x_{ij}
+ \sum_{j\,:\,\bar x_j\in C_i} \bar x_{ij} 
\le \sum_{j=1}^k x_{ij}
+ \sum_{j=1}^k \bar x_{ij} 
= 1 ~~~~~(i=1,\ldots,m)
\]
and by (\ref{constr:eq}), (\ref{constr:sij}) and  (\ref{constr:barsij})
\[ x_{ij} + \bar x_{j} 
\le  x_j +\bar x_j = 1 
~~~~(i=1,\ldots,m,~j=1,\ldots,k)\]
and 
\[ \bar x_{ij} + x_{j} 
\le  \bar x_j + x_j = 1 
~~~~(i=1,\ldots,m,~j=1,\ldots,k)~.\]
The essence of the proof is as follows.
It follows from the definitions of $C_0$ and $d_0$ that there exist $U_i, V_i \in\{0,1\}$
such that $V_i \le U_i$ and 
\[ d_0 = \sum_{i=0}^{(2k+1)\, m-1} \base^i
 U_i \]
and 
\[
C_0 = \sum_{i=0}^{(2k+1)\, m-1} \base^i 
\cdot V_i ~.
\]
It is easy to see that
$C_0 \ge d_0$ only if $V_i = U_i$ for every $i$.
\end{proof}
Thus, we established the proposition below.
\begin{prop}
The weight constraint {\em(\ref{const_knp})} and
the value constraint {\em(\ref{finalknap})} have a feasible solution if and only if the given CNF formula has a satisfying assignment with exactly one true literal per clause.
\end{prop}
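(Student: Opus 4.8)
The plan is to prove both implications by chaining the four preceding propositions, so that the bulk of the technical work (the aggregation and the base-$\beta$ carry analysis) is already discharged. For the \emph{if} direction I would start from a satisfying assignment of the 1-in-3-SAT instance with exactly one true literal per clause. By Proposition \ref{model_valifity} this yields $0,1$-values for the variables $x_j,\bar x_j,x_{ij},\bar x_{ij}$ satisfying (\ref{constr:eq}), (\ref{constr:ineq}), (\ref{const:Ci}), (\ref{constr:tij}) and (\ref{constr:tijbar}). I would then complete the variable vector by choosing the slacks $s_{ij},\bar s_{ij}\in\{0,1\}$ so that the inequalities $x_j\ge x_{ij}$ and $\bar x_j\ge\bar x_{ij}$ become the unique-choice equalities (\ref{constr:sij}) and (\ref{constr:barsij}); this is possible precisely because (\ref{constr:ineq}) holds. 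With all of (\ref{constr:eq}), (\ref{constr:sij}) and (\ref{constr:barsij}) satisfied, Proposition \ref{prop:aggreg} gives the single aggregated equality (\ref{const:sing_pre}), $A=b$, which in particular yields the weight constraint (\ref{const_knp}). It remains to derive the value constraint (\ref{finalknap}): here I would observe that (\ref{const:Ci}) and the converted implications (\ref{constr:tijconverted}), (\ref{constr:tijbarconverted}) force every base-$\beta$ digit of $C_0$ to meet or exceed the corresponding digit of $d_0$, so $C_0\ge d_0$, i.e. (\ref{const:C0d0}). Combining with $A=b$ and the definitions (\ref{def:C}), (\ref{def:d}) gives $C=\beta^{(2k+1)m}A+C_0\ge\beta^{(2k+1)m}b+d_0=d$, which is (\ref{finalknap}).

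For the \emph{only if} direction I would assume a feasible knapsack solution, i.e. that (\ref{const_knp}) and (\ref{finalknap}) both hold. Proposition \ref{prop:3.4} then supplies (\ref{const_rmv}) and (\ref{const:C0d0}); in particular, (\ref{const_knp}) and (\ref{const_rmv}) together give $A=b$. Feeding (\ref{const_rmv}) and (\ref{finalknap}) into the proposition that inverts the aggregation of the unique-choice equalities recovers (\ref{constr:eq}), (\ref{constr:sij}) and (\ref{constr:barsij}), while feeding (\ref{const_knp}) and (\ref{finalknap}) into the last proposition recovers (\ref{const:Ci}), (\ref{constr:tijconverted}) and (\ref{constr:tijbarconverted}). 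From (\ref{constr:eq}) together with (\ref{constr:sij}), (\ref{constr:barsij}) I would deduce the original inequalities (\ref{constr:ineq}), since $x_{ij}+s_{ij}+\bar x_j=1$ and $x_j+\bar x_j=1$ give $x_{ij}\le x_j$, and symmetrically for the barred variables; and I would note that (\ref{constr:tijconverted}), (\ref{constr:tijbarconverted}) coincide with (\ref{constr:tij}), (\ref{constr:tijbar}) on the indices where $\delta(i,\cdot)=1$ and are vacuous otherwise. Thus every hypothesis of Proposition \ref{model_valifity} is met, and that proposition certifies that the truth assignment defined by the $x_j$ satisfies the formula with exactly one true literal per clause.

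The main obstacle, and the one place where care is needed, is the digit-level accounting that underpins both directions: one must be certain that the aggregation coefficients of Table~1 keep the blocks $A$ and $C_0$ in non-overlapping base-$\beta$ digit ranges, so that no carry can turn a genuine violation of an individual constraint into a spurious satisfaction of the aggregated inequality. This is exactly the property exploited in Proposition \ref{prop:3.4} through the strict estimate $2km\sum_{i=0}^{(2k+1)m-1}\beta^i<\beta^{(2k+1)m}$ and the digitwise comparison showing $C_0\ge d_0$ forces $V_i=U_i$ for every position $i$. Once those facts are invoked, the remaining work is purely the bookkeeping of matching each recovered group of constraints against the hypotheses of Proposition \ref{model_valifity}, and no further estimation is required.
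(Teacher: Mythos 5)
Your proposal is correct and follows essentially the same route as the paper, which establishes this proposition simply by chaining Propositions \ref{model_valifity}, \ref{prop:aggreg}, \ref{prop:3.4} and the two subsequent recovery propositions. You are in fact somewhat more careful than the paper, which leaves the ``if'' direction (constructing the slacks, deducing $A=b$ from Proposition \ref{prop:aggreg}, and observing that termwise $C_0\ge d_0$ yields $C\ge d$) entirely implicit.
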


The digital positions are allocated to the various constraint as indicated in Table 2.

\begin{table}[h]
\begin{center}
\begin{tabular}{|c|c|c|c|}
\hline
& constraints & start position & number of positions \\
\hline 
&&& \\
 (\ref{const:Ci} )
 &
$\sum_{j\,:\, x_j \in C_i} x_{ij}
+ \sum_{j\,:\,\bar x_j \in C_i}
  \bar x_{ij} \ge  1$ 
&  $0$
& $m$ \\
&&& \\
\hline
&&& \\
(\ref{constr:tijconverted})
& $x_{ij}  + \bar x_j \ge 
\delta(i, x_j)$
& $m$
& $k\,m$
\\
&&&\\
\hline
&&& \\
(\ref{constr:tijbarconverted})
&
$\bar x_{ij}  + x_j \ge 
\delta(i, \bar x_j)$
&
$m+k\,m = (k+1)\,m$
&
$k\,m$
\\
&&&\\
\hline
&&&\\
(\ref{constr:sij})
&
$x_{ij} + s_{ij} + \bar x_j = 1$
&
$(k+1)\,m + k\,m = (2k+1)\,m$ 
&  $k\,m$  \\
&&&\\
\hline
&&&\\
(\ref{constr:barsij})
&
$\bar x_{ij} + s_{ij} + x_j = 1$
&
$(2k+1)\,m + k\,m = (3k+1)\,m$
&  $k\,m$ \\
&&& \\
\hline
&&& \\
(\ref{constr:eq})
&
$\sum_{j=1}^k x_{ij}
+ \sum_{j=1}^k \bar x_{ij}
= 1  $
&
$(3k+1)\, + k\,m = (4k+1)\, m$
&
$m$
\\
&&&\\
\hline
&&&\\
(\ref{constr:eq})
& $x_j + \bar x_j = 1$
&  $(4k+1)\, m + m = (4k+2)\,m $
&  $k$ \\
&&& \\
\hline
\end{tabular}
\caption{Allocating the digital positions}
\end{center}
\end{table}

Thus, we established to following:
\begin{theorem}
There exists a computable sequence of tuples 
$$
\{\ba^n\}_{n=1}^\infty 
= \{(a^n_0,a^n_1,\ldots,a^n_n)\}_{n=1}^\infty
$$
such that the FW-Knapsack problem with the constraint defined by $\ba^n$
in \em{(\ref{knap:0})}
is {\rm NP}-complete.
\end{theorem}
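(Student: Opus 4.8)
The plan is to establish the two directions of NP-completeness and then repackage the construction as a single computable weight-sequence. Membership in NP is immediate: for a fixed tuple $\ba^n$, an instance of FW-Knapsack is given by $(c_1,\ldots,c_n,d)$, and a subset $S\subseteq N$ serves as a certificate, since one checks $\sum_{j\in S}a_j\le b$ and $\sum_{j\in S}c_j\ge d$ in time polynomial in the input, provided the fixed weights have bit-length polynomial in $n$. I would verify this size condition explicitly from Table~2: after the aggregation the weights are $\beta$-ary integers with $(4k+2)m+k=O(k^2)$ digits and $\beta=2k+1=O(k)$, so each $a_j$ has $O(k^2\log k)$ bits, which is polynomial in $n=2k+4km$.

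For NP-hardness I assemble the reduction from 1-in-3-SAT, which is NP-complete, by chaining the propositions already proved. Given a formula $\phi$ with $k$ variables and $m$ clauses I first reduce to $m=k$ by padding with dummy variables and trivially satisfied clauses, so that the single index $k$ fixes the pair $(k,m)$ and hence $n=2k+4k^2$ without ambiguity. The decisive structural point, which I would state and check carefully, is that the weight constraint (\ref{const_knp}) and the capacity $b$ arise from aggregating (via Proposition~\ref{prop:aggreg}) only the \emph{formula-independent} unique-choice constraints (\ref{constr:eq}), (\ref{constr:sij}) and (\ref{constr:barsij}); hence $a^{(k,m)}_1,\ldots,a^{(k,m)}_n$ and $b^{(k,m)}$ depend on $(k,m)$ alone. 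All dependence on $\phi$ is confined to the value constraint (\ref{finalknap}), through the coefficients of $C$ (the rewards $c_j$) and the bound $d$, both of which are legitimately part of the input. Correctness then follows by combining the established facts: Proposition~\ref{model_valifity} shows the model faithfully encodes exact-one-literal satisfiability, while Proposition~\ref{prop:aggreg} and Proposition~\ref{prop:3.4} together show that a $0,1$-vector satisfies (\ref{const_knp}) and (\ref{finalknap}) precisely when it satisfies all the original model constraints; the concluding proposition of the reduction then yields that the FW-Knapsack instance is feasible if and only if $\phi$ is 1-in-3 satisfiable. Since $c_j$ and $d$ are computed from the explicit aggregation formulas in polynomial time, this is a polynomial-time many-one reduction.

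Finally I exhibit the computable sequence. Define $\ba^n$ by setting, whenever $n=2k+4k^2$ for some positive integer $k$, the tuple $(a^n_0,\ldots,a^n_n)=(b^{(k,k)},a^{(k,k)}_1,\ldots,a^{(k,k)}_n)$ produced by the aggregation above, and setting $\ba^n$ arbitrarily (say all-ones weights with capacity $0$) for the remaining $n$. Because the aggregation coefficients are given by the closed forms in Tables~1 and 2, the map $n\mapsto\ba^n$ is computable. The reduction of the previous paragraph sends each padded 1-in-3-SAT instance, of size $\Theta(k)$, to an FW-Knapsack instance with weight-tuple $\ba^n$ at the single value $n=2k+4k^2=\Theta(k^2)$, so the blow-up is polynomial and NP-hardness of FW-Knapsack for this sequence follows. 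Together with membership in NP, this proves the theorem.

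I expect the main obstacle to be the structural verification in the second paragraph: one must confirm that separating (\ref{const:sing_pre}) into the two inequalities (\ref{const_knp}) and (\ref{const_rmv}) really places \emph{all} formula-dependent data on the value side. Concretely, the coefficients of $C=\beta^{(2k+1)m}A+C_0$ inherit the factors $\delta(i,z)$ through $C_0$, and these factors, being the only carriers of information about $\phi$, must occur exclusively in $C_0$ and in the bound $d_0$, never in $A$ or $b$. Checking this, together with confirming that the digital-position allocation of Table~2 keeps the aggregated constraints non-interfering so that Proposition~\ref{prop:3.4} applies, is the delicate part; the remaining polynomiality and computability claims are routine.
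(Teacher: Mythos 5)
Your proposal is correct and follows essentially the same route as the paper: the paper's own ``proof'' of the theorem is precisely the assembly of Propositions \ref{model_valifity}, \ref{prop:aggreg}, \ref{prop:3.4} and their successors, together with the observation that the aggregated weights and capacity depend only on $(k,m)$ while all formula-dependence is pushed into the rewards $c_j$ and the bound $d$. You additionally make explicit a few points the paper leaves implicit (NP membership via the polynomial bit-length of the fixed weights, the padding that justifies $k=m$, and the definition of $\ba^n$ for $n$ not of the form $2k+4k^2$), which are welcome but do not change the argument.
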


\subsection{An example}
Consider the instance $C_1\wedge C_2$ of {\rm 1-in-3-SAT} where 
\[ \eqalign {
C_1=&\ x_1 \vee x_2 \vee x_3 \cr
C_2 =&\ x_1 \vee \bar x_2  \vee \bar x_3~.\cr
} \]

Denote the constraints as follows.
\[\eqalign{
  A_j ~:~&\ x_j + \bar x_j = 1 ~~~~ (j=1,2,3) \cr
  C^0_i~:~& \ \sum_{j=1}^k x_{ij} + 
              \sum_{j=1}^k \bar x_{ij} = 1
              ~~~~(i=1,2)\cr
  \bar B_{ij}~:~&\ \bar x_{ij}+\bar s_{ij}+ x_j=1 
  ~~~~(i=1,2,~j=1,2,3) \cr
   B_{ij}~:~&\ x_{ij}+s_{ij}+\bar x_j=1 
  ~~~~(i=1,2,~j=1,2,3) \cr
  \bar D_{ij}~:~&\ (\bar x_{ij} + x_j)\delta(i,x_j) \ge \delta(i,x_j) 
   ~~~~(i=1,2,~j=1,2,3)  \cr
  D_{ij}~:~&\ (x_{ij} + \bar x_j)\delta(i,x_j) \ge \delta(i,x_j)
   ~~~~(i=1,2,~j=1,2,3) \cr
  }\] 
The weight coefficients are indicated by the table in Figure 1, where the first row in the table indicates the constraint and the second row indicates the power of $\base$ to which it corresponds.
\renewcommand{\ss}{}
\begin{figure} 
\label{figure:weight}
\includegraphics[width=420pt]{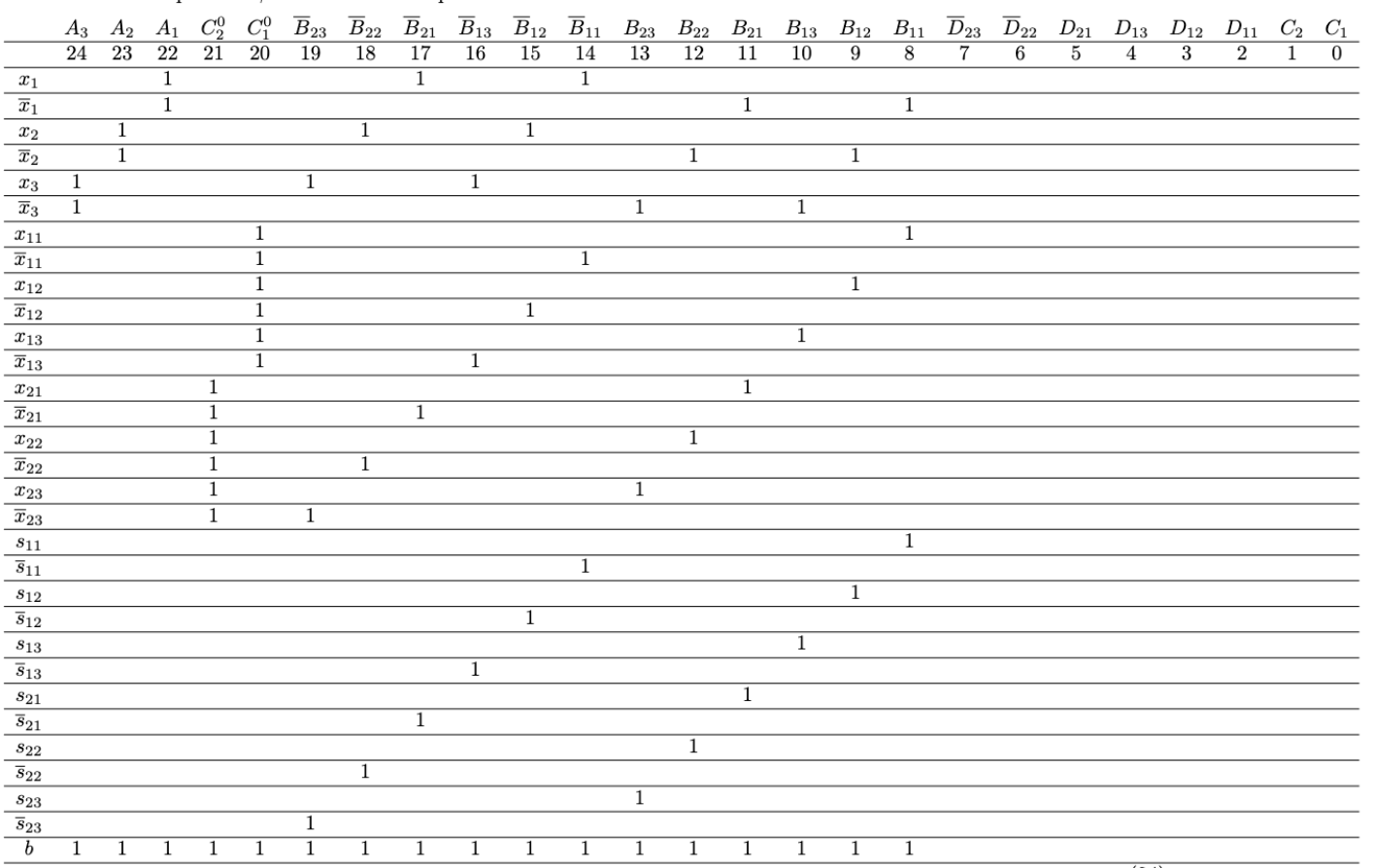}
\caption{Digits of the weight coefficients}
\end{figure}

The  value coefficients are indicated by the table in Figure 2.
\begin{figure} 
\includegraphics[width=420pt]{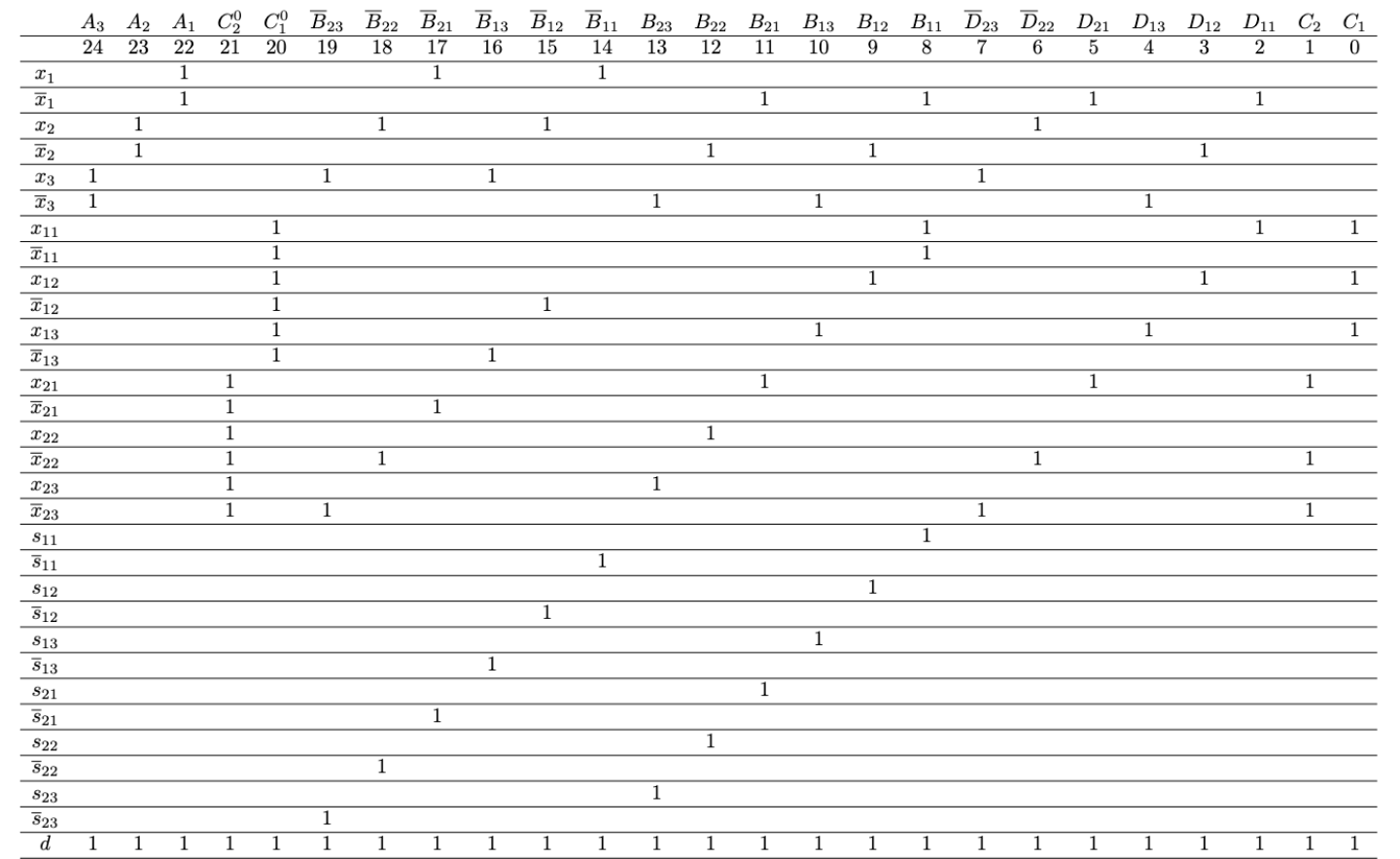}
\caption{Digits of the value coefficient}
\end{figure}

\vskip 6pt
\renewcommand{\base}{\beta}

\section{\hskip -16pt. Other problems}
The Covering Feasibility problem is essentially the same as the Knapsack Feasibility problem where the roles of the objective and constraint interchange. Our NP-completeness proof can be easily modified to prove that covering problem with fixed constraint is also NP-complete. 
We already mentioned the constrained shortest-path problem.

 \bibliographystyle{plain}

\section*{Appendix A}

{\it Proof of Proposition \ref{prop:aggreg}. }
Obviously, if $(x_1,\ldots,x_n)$ satisfies (\ref{sing}), then  it satisfies (\ref{sing}).

Conversely, suppose $(x_1,\ldots,x_n)$ satisfies (\ref{sing}). 
We first prove (\ref{many}) for $i=m$. 
It follows from (\ref{sing}) that
\[
\base^{m-1}\sum_{j=1}^n a_{mj} x_j 
= \frac{\base ^m -1}{\base  -1} -
\sum_{i=1}^{m-1} \base ^{i-1} \bigg(\sum_{j=1}^n a_{ij} x_j\bigg)  
<  \frac{\base ^m}{\base  -1} ~,
\]
hence,
\[
\sum_{j=1}^n a_{mj} x_j  
= \left\lfloor 
    \frac{\base  }
         {\base-1} 
    \right \rfloor   
= 1 ~.
\]
Also,
\[ \eqalign{
\base^{m-1}\sum_{j=1}^n a_{mj} x_j 
\ge &\ 
\frac{\base ^m - 1}{\base -1} 
- p \sum_{i=1}^{m-1} \base ^{i-1} \cr
=&\ \frac{\base ^m -1 }{\base  - 1} 
- p\cdot\frac{\base ^{m-1} - 1}{\base  -1} \cr
=&\ \frac{\base ^m - p\cdot\base ^{m-1} }{\base  - 1} 
 + \frac{ p - 1}{\base-1} \cr
\ge &\  \frac{\base^m - p\cdot\base^{m-1} }{\base  - 1} ~.
\cr
}\]
Hence, assuming $\base > p$,
\[
\sum_{j=1}^n a_{mj} x_j 
\ge  \left \lceil
\frac{\base  - p }{\base  -1}
\right\rceil= 1~,
\]
so we established that 
\[
\sum_{j=1}^n a_{mj} x_j   = 1~.
\]
We proceed by downward induction.  
For the inductive step, suppose $\ell \le m-1$ and for $i=\ell+1,\ldots,m$ we have 
\[ \sum_{j=1}^n a_{ij} x_j = 1~. \]
We have
\[\eqalign{
\base ^{\ell-1}\sum_{j=1}^n a_{\ell j} x_j 
=&\ \frac{\base ^m -1}{\base  -1} 
-\sum_{i=1}^{\ell-1} \base ^{i-1} \bigg(\sum_{j=1}^n a_{ij} x_j\bigg)
-\sum_{i=\ell+1}^{m} \base ^{i-1} \bigg(\sum_{j=1}^n a_{i j} x_j\bigg)  \cr
\le &\
\frac{\base ^m -1}{\base  -1} 
-\sum_{i=\ell+1}^{m} \base ^{i-1} \bigg(\sum_{j=1}^n a_{i j} x_j\bigg)  \cr
= &\
\frac{\base ^m -1}{\base  -1} 
-\sum_{i=\ell+1}^{m} \base ^{i-1}   \cr
= &\
\frac{\base ^m -1}{\base  -1} 
- \frac{\base ^{m} - \base ^{\ell}}{\base -1}
  \cr
\le &\
\frac{\base ^\ell}{\base  -1} 
  \cr
}\]
Hence,
\[
\sum_{j=1}^n a_{\ell j} x_j
\le  
\left\lfloor
\frac{\base ^\ell }{ \base ^{\ell- 1}\base   -1)} 
\right\rfloor 
= \left\lfloor
\frac{\base  }{ \base  -1} 
\right\rfloor = 1 ~.
\]
Also,
\[\eqalign{
\base ^{\ell-1}\sum_{j=1}^n a_{\ell j} x_j 
=&\ 
\frac{\base ^m -1}{\base  -1} 
-\sum_{i=1}^{\ell-1} \base ^{i-1} \bigg(\sum_{j=1}^n a_{ij} x_j\bigg)
-\sum_{i=\ell+1}^{m} \base ^{i-1} \bigg(\sum_{j=1}^n a_{i j} x_j\bigg)  \cr
\ge &\
\frac{\base ^m -1}{\base  -1} 
- p\sum_{i=1}^{\ell-1} \base ^{i-1} 
-\sum_{i=\ell+1}^{m} \base ^{i-1}   \cr
=&\
\frac{\base ^m -1}{\base  -1} 
- p \cdot \frac{\base ^{\ell-1}-1}{\base -1}
- \frac{\base ^{m} - \base ^{\ell}}{\base -1}
   \cr
=&\
\frac{p-1 
- p \,\base ^{\ell-1}
+ \base ^{\ell}}
{\base -1}
   \cr
>&\
   \frac{\base ^{\ell} - p \,\base ^{\ell-1}}
    {\base -1}\cr
\ge&\
\frac{\base ^{\ell} - (\base -1)\,\base ^{\ell-1}}
{\base -1}\cr
= &\
\frac{\base ^{\ell-1}}
{\base -1}\cr
}\]
It follows that
\[ \sum_{j=1}^n a_{\ell j} x_j
\ge \left\lceil
\frac{1}{\base -1}
\right\rceil = 1 ~,
\]
hence,
\[ \sum_{j=1}^n a_{\ell j} x_j  = 1 ~.
~~~~~~~~~~~~~~~~\qed
\]

\section*{Appendix B}
\begin{prop}
Given any $\{x,y,z\} \subseteq\{0,1\}$,
the system
\begin{equation} \label{eq:101}
\eqalign{
(1-x) + a + b = &\ 1 \cr
   y  + b + c = &\ 1 \cr
(1-z) + c + d = &\ 1 \cr
}
\end{equation}
has a solution 
$\{a,b,c,d\} \subseteq\{0,1\}$ 
if and only if
\[ x + y + z \ge 1 ~.\]
\end{prop}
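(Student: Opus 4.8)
The plan is to first put the system into a more transparent form and then argue the two implications separately, the ``only if'' direction by contraposition and the ``if'' direction by explicit construction. Subtracting the constant terms, the three equations in~(\ref{eq:101}) are equivalent to
\[ a+b = x, \qquad b+c = 1-y, \qquad c+d = z, \]
where all of $a,b,c,d,x,y,z$ lie in $\{0,1\}$. In this form the role of each variable is clear: $b$ is shared by the first two equations, $c$ is shared by the last two, and the middle equation $b+c = 1-y$ is the one that couples the two ends together.

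For the ``only if'' direction I would argue the contrapositive: assume $x+y+z < 1$, i.e.\ $x=y=z=0$, and show the system is infeasible. With $x=0$ the first equation forces $a=b=0$; with $y=0$ the middle equation becomes $b+c=1$, so that $c=1$; but then $z=0$ makes the third equation read $c+d=0$, which is unsatisfiable since $c=1$. This chain of forced values yields the contradiction, establishing that any solution requires $x+y+z\ge 1$.

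For the ``if'' direction I would split on the value of $y$ and exhibit a feasible $(a,b,c,d)$ in each case. If $y=1$, set $b=c=0$, $a=x$, $d=z$; the middle equation holds automatically and the outer two are satisfied for any $x,z$. If $y=0$, then $x+y+z\ge 1$ forces $x=1$ or $z=1$: in the former case take $b=1,\,a=0,\,c=0,\,d=z$, and in the latter take $c=1,\,d=0,\,b=0,\,a=x$; in each case the middle equation reads $b+c=1$ and the remaining equations check out. These cases exhaust all $(x,y,z)$ with $x+y+z\ge 1$.

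The calculations are routine; the only real care is in the ``if'' direction, where one must ensure the constructions are compatible across the shared variables $b$ and $c$ and that the case split genuinely covers every admissible $(x,y,z)$. The main (minor) obstacle is thus bookkeeping: choosing which of $b$ or $c$ carries the single ``$1$'' demanded by the middle equation so that the binding outer equation ($a+b=x$ when $x=1$, or $c+d=z$ when $z=1$) can still be satisfied.
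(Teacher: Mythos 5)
Your proof is correct and follows essentially the same route as the paper's: the paper simply enumerates all eight triples $(x,y,z)$ and exhibits an explicit $0,1$-solution in the seven feasible cases while showing $(0,0,0)$ is infeasible, and your argument is the same explicit verification, merely reorganized (the rewritten system $a+b=x$, $b+c=1-y$, $c+d=z$, a contrapositive for the infeasible case, and a three-way split on $y$ and then $x$ or $z$ that compresses the seven feasible cases). All of your constructed witnesses check out against the shared variables $b$ and $c$, so nothing is missing.
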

\begin{proof}
First, if $(x,y,z)=(0,0,0)$, the system (\ref{eq:101}) reduces to the system
\begin{equation}
\eqalign{
a + b = &\ 0 \cr
b + c = &\ 1 \cr
c + d = &\ 0 \cr
}
\end{equation}
which has no $0,1$-solution. 
The other seven cases are as follows.
\begin{enumerate}
\item 
$(x,y,z)=(1,0,0)$. 
The system reduces to
\begin{equation}
\eqalign{
a + b = &\ 1 \cr
b + c = &\ 1 \cr
c + d = &\ 0 \cr
}
\end{equation}
and the solution is 
$(a,b,c,d)=(0,1,0,0)$.
\item 
$(x,y,z)=(0,1,0)$. 
The system reduces to
\begin{equation} 
\eqalign{
a + b = &\ 0 \cr
b + c = &\ 0 \cr
c + d = &\ 0 \cr
}
\end{equation}
and the solution is 
$(a,b,c,d)=(0,0,0,0)$.
\item 
$(x,y,z)=(0,0,1)$.
\begin{equation} 
\eqalign{
 a + b = &\ 0 \cr
 b + c = &\ 1 \cr
 c + d = &\ 1 \cr
}
\end{equation}
and the solution is 
$(a,b,c,d)=(0,0,1,0)$.
\item
$(x,y,z)=(1,1,0)$.
The system reduces to
\begin{equation} 
\eqalign{
 a + b = &\ 1 \cr
 b + c = &\ 0 \cr
 c + d = &\ 0 \cr
}
\end{equation}
and the solution is 
$(a,b,c,d)=(1,0,0,0)$.
\item 
$(x,y,z)=(1,0,1)$.
The system reduces to
\begin{equation} 
\eqalign{
 a + b = &\ 1 \cr
 b + c = &\ 1 \cr
 c + d = &\ 1 \cr
}
\end{equation}
and the solutions are 
$(a,b,c,d)=(1,0,1,0)$ and
$(a,b,c,d)=(0,1,0,1)$.
\item 
$(x,y,z)=(0,1,1)$.
The system reduces to
\begin{equation} 
\eqalign{
 a + b = &\ 0 \cr
 b + c = &\ 0 \cr
 c + d = &\ 1 \cr
}
\end{equation}
and the solution is
$(a,b,c,d)=(0,0,0,1)$.
\item 
$(x,y,z)=(1,1,1)$.
The system reduces to
\begin{equation} 
\eqalign{
 a + b = &\ 1 \cr
 b + c = &\ 0 \cr
 c + d = &\ 1 \cr
}
\end{equation}
and the solution is
$(a,b,c,d)=(1,0,0,1)$.
\end{enumerate}
\end{proof}

\begin{coro}
A disjunction $x\vee y \vee z$ is true
if and only if the following conjunction is satisfiable with exactly one true literal per clause:
\[ (\neg x \vee a \vee b)
\wedge (y \vee b \vee c)
\wedge (\neg z \vee c \vee d)~.\]
\end{coro}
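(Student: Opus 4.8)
The plan is to obtain the corollary directly from the preceding proposition by translating the combinatorial ``exactly one true literal per clause'' requirement into the arithmetic system (\ref{eq:101}). First I would identify truth values with the integers $0$ and $1$, so that the truth value of a positive literal is its variable while the truth value of a negated literal $\neg x$ is $1-x$. In particular, among the three clauses the negated literals $\neg x$ and $\neg z$ contribute $1-x$ and $1-z$, respectively, whereas the positive literals contribute their variables directly.

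Next I would record the elementary fact that a clause has exactly one true literal if and only if the sum of the truth values of its literals equals $1$. Applying this to $(\neg x \vee a \vee b)$, $(y \vee b \vee c)$, and $(\neg z \vee c \vee d)$ produces precisely the three equations $(1-x)+a+b=1$, $\;y+b+c=1$, and $(1-z)+c+d=1$, which together are exactly the system (\ref{eq:101}). Hence, for a fixed assignment to $x,y,z$, the conjunction is satisfiable with exactly one true literal per clause if and only if the auxiliary variables $a,b,c,d$ admit a $0,1$-choice solving (\ref{eq:101}).

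I would then invoke the preceding proposition, by which (\ref{eq:101}) has a $0,1$-solution in $a,b,c,d$ exactly when $x+y+z\ge 1$; and I would close by observing that for $x,y,z\in\{0,1\}$ the disjunction $x\vee y\vee z$ is true precisely when at least one of them equals $1$, i.e.\ when $x+y+z\ge 1$. Chaining these three equivalences yields the stated biconditional. The only step needing any care is the bookkeeping for the negated literals: one must check that the ``exactly one'' sums correctly pick up $1-x$ and $1-z$ rather than $x$ and $z$, so that the three clauses line up with the rows of (\ref{eq:101}) in the intended order. Beyond this routine verification there is no real obstacle, since the substantive content has already been established in the proposition.
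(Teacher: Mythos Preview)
Your proposal is correct and is exactly the intended route: the paper states the corollary without proof, treating it as immediate from the preceding proposition via precisely the $0,1$-encoding and ``sum equals $1$'' translation you spell out. There is nothing to add beyond the bookkeeping you already flag.
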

\begin{coro}
The 3-SAT problem is polynomial-time reducible to 3-SAT with the requirement that in the satisfying assignment exactly one literal per clause is true;
hence the latter is NP-complete.
\end{coro}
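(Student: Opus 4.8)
The plan is to give the standard per-clause gadget reduction, using the first Corollary of this appendix as the sole building block. Let $\phi = C_1 \wedge \cdots \wedge C_m$ be an instance of $3$-SAT, where each clause is a disjunction of three literals, say $C_i = x_i \vee y_i \vee z_i$ with $x_i,y_i,z_i$ literals over the variables of $\phi$. For each clause $C_i$ I would introduce four fresh auxiliary variables $a_i,b_i,c_i,d_i$, chosen so that the auxiliary variables of distinct clauses are pairwise disjoint and disjoint from the variables of $\phi$. I then replace $C_i$ by the three-clause gadget supplied by the first Corollary,
\[
(\neg x_i \vee a_i \vee b_i)\wedge(y_i \vee b_i \vee c_i)\wedge(\neg z_i \vee c_i \vee d_i),
\]
and let $\psi$ be the conjunction of all $3m$ resulting clauses. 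This $\psi$ is a CNF formula with exactly three literals per clause, so it is a legitimate instance of $1$-in-$3$-SAT, and it is produced from $\phi$ in time linear in $m$.

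For correctness I would argue both directions by localizing to a single gadget. First observe that the original literals $x_i,y_i,z_i$ are the only variables shared among gadgets, while each quadruple $a_i,b_i,c_i,d_i$ occurs in the $i$th gadget alone. Hence, once a truth-value assignment to the variables of $\phi$ is fixed, the three clauses coming from $C_i$ constrain only the four fresh variables $a_i,b_i,c_i,d_i$, independently of all other gadgets. The first Corollary states exactly that these three clauses admit an assignment of $a_i,b_i,c_i,d_i$ with precisely one true literal per clause if and only if $x_i \vee y_i \vee z_i$ is true under the given assignment. Therefore a global $1$-in-$3$ satisfying assignment of $\psi$ exists if and only if there is an assignment to the variables of $\phi$ under which every $C_i$ is true, i.e.\ if and only if $\phi$ is satisfiable. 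In the forward direction the original variables inherit their values from the $1$-in-$3$ assignment of $\psi$; in the reverse direction one extends a satisfying assignment of $\phi$ by choosing, gadget by gadget, the auxiliary values guaranteed by the Corollary.

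It remains to note that the target problem lies in NP and to collect the pieces. Membership is immediate: given a candidate assignment, one checks in polynomial time that each clause of $\psi$ contains exactly one true literal. Since the map $\phi \mapsto \psi$ is polynomial-time computable and preserves satisfiability in the sense above, and since $3$-SAT is NP-complete, it follows that $1$-in-$3$-SAT is NP-hard, and hence NP-complete.

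I expect the only real subtlety --- the main obstacle --- to be the bookkeeping that justifies passing from the single-clause equivalence of the first Corollary to the whole-formula equivalence: one must make explicit that the freshness and disjointness of the auxiliary variables $a_i,b_i,c_i,d_i$ prevent any interaction between gadgets, so that the per-clause ``exactly one'' conditions can be satisfied simultaneously and independently while the original variables remain shared. A secondary point worth a sentence is the treatment of clauses with fewer than three literals (should the source $3$-SAT formulation permit them), which can be normalized to exactly three by repeating a literal before applying the gadget.
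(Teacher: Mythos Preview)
Your proposal is correct and is exactly the argument the paper intends: the paper states this corollary without proof, immediately after the single-clause gadget corollary, so the expected reasoning is precisely the per-clause replacement with fresh auxiliary variables that you spell out. Your added remarks on gadget independence and on padding short clauses are sound elaborations of details the paper leaves implicit.
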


\end{document}